\newcommand{\etal}{\textit{et al.}}
\begin{document}
\title{Hollow Victory: How Malicious Proposers Exploit Validator Incentives in Optimistic Rollup \\ Dispute Games
}

\author{Suhyeon Lee \inst{1} \inst{2}}

\institute{
Tokamak Network\\
\email{suhyeon@tokamak.network}
\and
Korea University\\
\email{orion-alpha@korea.ac.kr}
}

\maketitle              %
\begin{abstract}
Blockchain systems, such as Ethereum, are increasingly adopting layer-2 scaling solutions to improve transaction throughput and reduce fees. One popular layer-2 approach is the Optimistic Rollup, which relies security on a mechanism known as a dispute game for block proposals. In these systems, validators can challenge blocks that they believe contain errors, and a successful challenge results in the transfer of a portion of the proposer’s deposit as a reward. In this paper, we reveal a structural vulnerability in the mechanism: validators may not be awarded a proper profit despite winning a dispute challenge. We develop a formal game-theoretic model of the dispute game and analyze several scenarios, including cases where the proposer controls some validators and cases where a secondary auction mechanism is deployed to induce additional participation. Our analysis demonstrates that under current designs, the competitive pressure from validators may be insufficient to deter malicious behavior. We find that increased validator competition, paradoxically driven by higher rewards or participation, can allow a malicious proposer to significantly lower their net loss by capturing value through mechanisms like auctions. To address this, we propose countermeasures such as an escrowed reward mechanism and a commit-reveal protocol. Our findings provide critical insights into enhancing the economic security of layer-2 scaling solutions in blockchain networks.

\keywords{Ethereum \and Game Theory \and  Optimistic Rollup \and Security \and Smart Contract.}
\end{abstract}
\section{Introduction} \label{section: introduction}

Blockchain technology, and Ethereum in particular, has witnessed rapid growth in recent years. However, scalability and high transaction fees remain critical challenges. To address these issues, various layer-2 scaling solutions have been proposed, among which \emph{Optimistic Rollups} have gained significant attention. Optimistic Rollups enable off-chain computation while relying on an on-chain dispute resolution mechanism to ensure correctness. In these systems, block proposals are accepted optimistically, and validators are empowered to challenge any block they believe to be incorrect through a process known as a \emph{dispute game}.

In a typical dispute game, a block proposer stakes a deposit and submits a block. Validators then monitor the block and, if they detect an error, submit a challenge. This challenge is called a fraud proof \footnote{It is also referred to as a fault proof by Optimism, one of the leading optimistic rollup teams. However, the term fraud proof appears to be more widely accepted among optimistic rollups. Therefore, throughout this paper, we will use the term fraud proof.}. If the challenge is successful, the whole or a fraction of the proposer’s deposit is confiscated and awarded to the challenger. Ideally, this mechanism is meant to deter malicious behavior by imposing significant economic penalties on a proposer who submits an incorrect block. Conversely, if the validator loses the challenge, their deposit for the challenge will be forfeited and granted to the block proposer. However, if the incentive mechanism is not carefully calibrated, a malicious proposer might be able to limit his losses to only a small challenge fee even when many validators participate in challenging the block. It results in undermining the intended economic deterrence.

In this paper, we model the dispute game using game-theoretic methods and analyze several attack scenarios with focusing on the malicious block proposer under a fraud proof challenge. We consider a case where the malicious proposer controls a subset of validators as well as cases in which a secondary auction is deployed to induce additional validator competition to get the dispute winner prize. Our analysis reveals that, under current incentive structures, the liveness of the dispute game may be insufficient to motivate them, thereby leaving the system vulnerable to strategic exploitation.

Our contributions are as follows:
\begin{itemize}
    \item We develop a game-theoretic model of the dispute game in Optimistic Rollups, capturing the interactions between a malicious block proposer and validators under various challenge scenarios.
    \item We identify a critical vulnerability: malicious proposers can strategically minimize their own financial losses after being challenged, while simultaneously exploiting validator competition in a way that can diminish the net profitability for challengers, weakening the overall security model.
    \item We propose countermeasures, including an escrowed (deferred) reward mechanism and a commit-reveal protocol to rebalance the incentive structure and enhance the economic security of Optimistic Rollups.
\end{itemize}

The rest of the paper is organized as follows: Section \ref{section: related} reviews related work on dispute games and incentive mechanisms in layer-2 solutions. In Section \ref{section: model}, we present our formal model of the Optimistic Rollup dispute game. Section \ref{section: strategy} details the strategies a malicious proposer employ and introduces our proposed auction mechanism. Section \ref{section: analysis} provides a game-theoretic analysis of the auction mechanism, including conditions under which validators are incentivized to participate. Section \ref{section: solution} proposes two countermeasures to solve this structural vulnerability.
Finally, Section \ref{section: conclusion} discusses potential solutions and future research directions, and concludes the paper.

\section{Related Works} \label{section: related}

A variety of recent research has focused on understanding the economics and security of layer‐2 scaling solutions. In this section, we review them particularly in the context of optimistic rollups. 

\subsubsection{Economic and Incentive Analysis in Rollups.}
Early works in the economic analysis of rollups have focused on establishing robust security guarantees through well-aligned incentive mechanisms. Tas \etal~\cite{tas2022accountable} propose a framework for \emph{accountable safety} in rollups, which emphasizes the need for designs that hold participants economically accountable for misbehavior. Li~\cite{li2023security} further explores the security of optimistic blockchain mechanisms, highlighting the importance of ensuring that validator incentives are strong enough to deter malicious actions. In a similar vein, Mamageishvili and Felten~\cite{mamageishvili2023incentive} analyze incentive schemes for rollup validators, discussing strategies under the attention test.

\subsubsection{Data Availability Cost Optimization.}
Given that Optimistic Rollups are designed as a layer-2 scaling solution, minimizing the costs associated with data availability is a critical concern. Several studies have tackled this problem from different angles. Palakkal \etal~\cite{Palakkal2024sok} provide a systematization of compression techniques in rollups, outlining methods inefficiency in some rollups' practice. Mamageishvili and Felten~\cite{mamageishvili2023efficient} propose efficient rollup batch posting strategies on the base layer using call data, while Crapis \etal~\cite{crapis2023eip} offer an in-depth analysis of EIP-4844 economics and rollup strategies including blob cost sharing. Complementary empirical investigations by Heimbach and Milionis~\cite{Heimbach2025}, Huang \etal~\cite{huang2024two}, and Park \etal~\cite{park2024impact} further document the inefficiencies in current DA cost structures and examine their impact on rollup transaction dynamics and consensus security. Lee~\cite{lee2024180} investigates blob sharing as a potential remedy for the dilemma faced by small rollups in the post-EIP-4844 era.

\subsubsection{Fraud Proof and Dispute Resolution Protocols.}

Prior research on fraud proofs in Optimistic Rollups has primarily focused on ensuring rapid dispute resolution and robust liveness. For example, BoLD~\cite{alvarez2024bold} and Dave~\cite{nehab2024dave} propose protocols that optimize the dispute game to achieve low delays and cost-efficient on-chain verification, while Berger \etal~\cite{berger2025economic} analyze economic censorship dynamics to guarantee that disputes are resolved even under adversarial conditions. While previous studies focus on ensuring the liveness of the dispute game, this paper present study concentrates on a different vulnerability: even when dispute games proceed smoothly, the existing incentive structure may fail to adequately reward honest validators, thereby allowing a malicious block proposer to strategically minimize his costs.

\section{Dispute Game Model for Optimistic Rollups} \label{section: model}

In this section, we present a formal model of the dispute game used in optimistic rollups. We define the game in the style of a game-theoretic construct, denoted by $\mathcal{G}$, and specify the participants, assumptions, rules, and reward mechanisms.

\subsection{Participants and Deposits}
We consider two types of participants in the dispute game:
\begin{itemize}
    \item \textbf{Proposer:} Denoted by $P$, the proposer is responsible for submitting a proposed block.
    \item \textbf{Validators:} Denoted by the set $\mathcal{V} = \{ v_1, v_2, \ldots, v_n \}$, the validators (or challengers) participate in the dispute game to verify that a state transition is incorrect.
\end{itemize}

Each participant is required to hold a minimum deposit to obtain participation rights:
\begin{itemize}
    \item The proposer holds a deposit $D_P$.
    \item Each validator $v_i$ holds a deposit $D_V$, where we assume the same minimum deposit $D_V$ for all validators.
\end{itemize}

\subsection{Rules of the Dispute Game}
The dispute game $\mathcal{G}$ is triggered when one of more validators submit a challenge to a block by proposer $P$. The key rules are as follows:
\begin{enumerate}
    \item \textbf{Verification of State Transitions:} A dispute is resolved by verifying the correctness of the state transition. This is achieved either:
    \begin{enumerate}
        \item via the opponent's timeout, or
        \item through the execution of a bisection protocol on a dispute game contract, wherein the virtual machine (VM) executes the disputed state transition steps.
    \end{enumerate}
    
    \item \textbf{Collateral for Dispute:} Upon initiation of the dispute game, both parties are required to post an additional collateral deposit $D_g$, where
    \[
    0 < D_g < D_{\max}.
    \]
    This additional deposit is intended to further incentivize honest participation in the dispute resolution process.
\end{enumerate}

\subsection{Reward and Penalty Policy}
The reward policy of the dispute game is designed to penalize a malicious proposer and reward the challenger(s) as follows:
\begin{itemize}
    \item If the proposer $P$ loses the dispute game, the challenger(s) may claim a portion of the deposits. Specifically, the winning party is entitled to claim:
    \[
    \alpha \cdot D_P + D_g,
    \]
    where $\alpha$ is a reward parameter.In most real-world rollups, the majority of systems states nearly all or most of the rewards are given to challengers, implying a value close to 1 (100\%). Only one of the leading optimistic rollup teams, Arbitrum's document mentions only 1\% of the confiscated asset will be certainly rewarded to the dispute game winner \cite{arbitrum_economics_of_disputes}. Still, its DAO can reward more of the confiscated asset to the winner. Therefore,
    \[
    0 < \alpha \le 1.
    \]
    \item The deposits that are not forfeited are allocated to a communal fund (e.g., managed by a DAO) to support the overall system.
\end{itemize}

\subsection{Additional Assumptions}
For the purpose of our analysis, we assume:
\begin{enumerate}
    \item \textbf{Sequential and Parallel Occurrence:} The dispute game can occur in multiple rounds sequentially; furthermore, different instances of the dispute game may run in parallel.
    \item \textbf{Timeouts:} Timeouts are set generously so that the proposer cannot deliberately force a timeout by delaying responses. The creation time of each dispute game, denoted by $t_i$, is sufficiently long to preclude strategic timeouts.
\end{enumerate}

This model captures the strategic interaction between the proposer and validators under the assumptions of minimum deposits and additional collateral requirements. In our subsequent analysis, we use game-theoretic frameworks to examine the incentive imbalances that may arise, particularly focusing on how the profit for a validator as a dispute game winner can be limited nearly to zero.

\section{Strategy of Malicious Block Proposer} \label{section: strategy}

In this section, we illustrate the various strategies that a malicious proposer may employ to minimize his cost in the dispute game. We consider three scenarios and then describe an auction mechanism that the malicious proposer can trigger to force additional validator participation. In our analysis, a malicious proposer is denoted by \(P\) and the set of validators by \(\mathcal{V}\).

\subsection{Scenario Descriptions}

We consider the following three scenarios. The first scenario assumes the malicious proposer controls at least one validator. On the other hand, the left two scenarios do not assume the proposer-controlled validator. The difference between the scenario 2 and 3 is that the number of submitted challenges before the malicious proposer responds to any challenge. The scenario 3's feasibility will be analyzed in the next section.

\subsubsection*{Scenario 1 (Proposer-Controlled Validator Accounts)}
In this baseline scenario, the malicious proposer \(P\) controls a subset \(\mathcal{V}_P \subset \mathcal{V}\) of validator accounts. When a block \(B\) is proposed and a valid challenge is submitted by some validator \(v_i \in \mathcal{V} \setminus \mathcal{V}_P\), \(P\) can immediately counter by initiating a dispute game through one of his controlled validators \(v_j \in \mathcal{V}_P\). Under the reward mechanism (where a winning party receives \(\alpha \cdot D_P + D_g\) and \(\alpha \in (0,1]\)), the net cost for \(P\) is limited to:
\[
\text{Cost}_{P} = (1-\alpha) \cdot D_P,
\]
since the controlled validator wins the challenge and the attacker absorbs only the shortfall.

\subsubsection*{Scenario 2 (Multiple Challenges Exist)}
Suppose that multiple validators submit valid challenges against \(B\). In this situation, \(P\) may reduce his effective cost by deploying an auction contract. The auction mechanism is designed to order the challenges: validators who have not yet challenged are induced to bid. Since a validator’s alternative is to obtain a reward of zero (if they do not participate), the auction forces competitive bidding. In equilibrium, validators will bid amounts that reflect their full valuation, roughly given by
\[
v = \alpha \cdot D_P + D_g.
\]
Thus, if \(P\) loses the resulting auction-based challenge, his cost will be at least the winning bid, which in a competitive setting is driven upward by the participation of many validators.

\subsubsection*{Scenario 3 (Single Challenge Initially Submitted)}
In the case where only a single challenge is initially submitted, \(P\) may still deploy the auction contract to attract additional challenges. Even though one challenge exists, the auction mechanism creates a framework in which validators are incentivized to bid—since non-participation yields zero reward. The competitive pressure in the auction will then increase the cost borne by \(P\) if a challenger wins the dispute game. In both Scenarios 2 and 3, the key idea is that by forcing additional validator participation via an auction, \(P\)’s effective cost is raised beyond the minimal challenge fee.

\subsection{Auction Contract Construction}

To explain the scenario in which multiple or a single initial challenge is submitted (Scenarios 2 and 3), we propose an auction-based mechanism that integrates directly with the dispute game. The auction contract is designed to foster competition between validators, thereby minimizing the malicious proposer's loss. In essence, if a valid challenge has already been submitted, the malicious proposer can trigger this auction contract to invite further challenges. As an on-chain smart contract, the auction contract ensures that the auction winner (validator) will win the dispute game first. Otherwise, the auction operator's (malicious proposer) deposit in the auction contract will be confiscated and given to the auction winner.

\begin{algorithm}[htb]
\caption{Dispute Game Auction Contract}
\label{alg:dispute_auction}
\begin{algorithmic}[1]
\State \textbf{Input:} \texttt{disputeId}, auction duration \(T_{\text{auction}}\)
\State \textbf{Initialize:} 
\State \quad \(\texttt{auctionStart} \gets\) current timestamp (or block number)
\State \quad \(\texttt{Bids} \gets \varnothing\)
\State \quad Malicious proposer deposits \(E\) (equal to the promised reward)
\Procedure{SubmitBid}{validator, bidAmount}
    \If {current time \(<\) \(\texttt{auctionStart}+T_{\text{auction}}\)}
        \State Verify dispute game instance exists for \texttt{disputeId}
        \State Append \((\text{validator},\ bidAmount)\) to \(\texttt{Bids}\)
    \EndIf
\EndProcedure
\Procedure{FinalizeAuction}{}
    \If {current time \(\geq\) \(\texttt{auctionStart}+T_{\text{auction}}\)}
        \State Let \(b_{\max}\) and \(b_{\text{second}}\) be the highest and second-highest bids in \(\texttt{Bids}\)
        \State Winner \(\gets\) validator associated with \(b_{\max}\)
        \State Grant Winner the exclusive right to initiate the dispute game at cost \(b_{\text{second}}\)
    \EndIf
\EndProcedure
\Procedure{ResolveDispute}{outcome}
    \If {outcome = first win for the winning validator}
        \State Refund excess deposit to Winner
    \Else
        \State Transfer \(b_{\text{second}}\) from Winner's deposit to malicious proposer
    \EndIf
\EndProcedure
\end{algorithmic}
\end{algorithm}

Our design leverages only minimal EVM functionalities—such as reading block timestamps/number, basic list manipulations, and conditional checks—which makes it well-suited for implementation in a standard smart contract environment. The pseudocode (presented in Algorithm \ref{alg:dispute_auction}) outlines the essential steps:

We now describe the detailed steps of our proposed auction mechanism, which is designed to force additional validator participation and increase the economic cost for a malicious block proposer. The mechanism proceeds as follows:

\textbf{Auction Initialization:} When at least one valid challenge is detected, the auction contract is deployed. At this point, the malicious block proposer transfers, as a deposit, an amount corresponding to the reward promised by the dispute game setting. This initialization triggers the fixed bidding period during which the auction will accept bids.

\textbf{Bid Submission:} Validators who have not yet submitted a challenge are invited to participate in the auction. Prior to bidding, each validator must initiate a dispute game for the challenged block. This step is essential as it enables the auction contract to verify the existence of a corresponding dispute game contract and ensures that all contractual conditions are met. If a validator's valuation (which reflects the full reward) exceeds the current highest bid, they submit their bid along with the necessary deposit.

\textbf{Auction Finalization:} Once a pre-defined number of blocks have passed (or the auction duration has elapsed), the auction period terminates. At this point, the auction contract finalizes the bidding process using a Vickrey (second-price) auction mechanism, whereby the highest bidder wins but pays the amount of the second-highest bid. The winning validator is granted the exclusive guarantee to finish the dispute game challenge first.

\textbf{Dispute Resolution and Cost Recovery:} The dispute game then proceeds, and the system monitors all submitted challenges. Once it is verified—by querying the status of all dispute games—that the winning validator's dispute game is the first to conclude, the auction contract triggers the cost-recovery process. Specifically, the winning validator’s deposit is used to transfer the winning bid amount to the malicious block proposer, while any remaining difference (if the winning bid is less than the full deposit) is refunded to the winning validator.

In summary, this auction contract serves as a critical component of our overall strategy. It transforms the challenge submission process into a competitive auction, where the equilibrium bid is driven by each validator's valuation.

\section{Analysis on the Dispute Game Auction} \label{section: analysis}

In our dispute game auction, each validator faces a fixed participation cost \(c>0\) (accounting for gas fees, contract invocation, etc.). The reward available if a validator wins a dispute is given by
\[
R = \alpha D_P + D_g,
\]
where \(D_P\) is the forfeitable deposit of the malicious proposer, \(D_g\) is the additional collateral for the dispute game process, and \(\alpha\in(0,1]\) is a reward parameter. In practice, since the protocol is uniform, validators’ valuations are very similar. To capture this slight heterogeneity, we assume that each validator’s valuation is drawn from the interval \([R-\mu, R]\), where \(\mu>0\). Under these assumptions, the potential surplus a validator can obtain by winning the auction comes solely from the dispersion \(\mu\).

\subsection{Expected Payoffs and Participation Conditions in a Second-Price Auction}

Assume that \(n\) validators participate in a second-price (Vickrey) auction created by the malicious block proposer. Since the valuations are nearly identical, each validator’s true valuation is \(v_i\in [R-\mu, R]\). By a change of variable, let
\[
X_i = v_i - (R-\mu),
\]
so that \(X_i\) is uniformly distributed on \([0,\mu]\). It is well known that in a second-price auction with \(n\) independent draws from \(U[0,\mu]\), the expected maximum is
\[
E[X_{(1)}] = \frac{n}{n+1}\mu,
\]
and the expected second-highest value is
\[
E[X_{(2)}] = \frac{n-1}{n+1}\mu.
\]
Thus, the expected surplus (i.e., the difference between the highest and second-highest bid) is
\[
E[X_{(1)} - X_{(2)}] = \frac{\mu}{n+1}.
\]
Since each validator pays a fixed cost \(c\) upon participation regardless of winning or losing, the net gain for a winning validator is
\[
\pi_{\text{win, validator}} = \frac{\mu}{n+1} - c.
\]
A validator who loses simply incurs a loss of \(c\). Therefore, a validator’s decision to participate (as opposed to abstaining and receiving a payoff of zero) is individually rational if the expected surplus exceeds the cost:
\[
\frac{\mu}{n+1} - c > 0 \quad \Longleftrightarrow \quad \mu > c\,(n+1).
\]
We formalize this result as follows.

\begin{theorem}
Let \(n\) risk-neutral, symmetric validators have valuations drawn independently from the interval \([R-\mu, R]\), with \(\mu>0\) representing the small dispersion in valuations. In a second-price auction in which each validator incurs a fixed participation cost \(c>0\), the expected surplus for the winning validator is
\[
E[\Delta] = \frac{\mu}{n+1}.
\]
Thus, a validator has a positive incentive to participate if and only if
\[
\mu > c\,(n+1).
\]
\end{theorem}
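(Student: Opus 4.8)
The plan is to build on the reduction already carried out in the setup: after the change of variable \(X_i = v_i - (R-\mu)\), the draws \(X_1,\dots,X_n\) are i.i.d.\ uniform on \([0,\mu]\), and the statement decomposes into two essentially independent ingredients, namely the strategic structure of the second-price auction and a computation of order statistics. I would first invoke the classical Vickrey property that truthful bidding is a weakly dominant strategy in a second-price auction, so that each validator bids its true valuation; the winner is then the validator holding the largest draw \(X_{(1)}\), and it pays the second-largest draw \(X_{(2)}\). This makes the realized surplus exactly \(\Delta = X_{(1)} - X_{(2)}\) and reduces the first claim to evaluating \(E[X_{(1)} - X_{(2)}]\).

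For that expectation, the cleanest route is through the spacings of the uniform sample. Adjoining the endpoints \(0\) and \(\mu\) to the ordered draws produces \(n+1\) consecutive gaps that are exchangeable and sum to \(\mu\), so each gap has mean \(\mu/(n+1)\); since \(\Delta\) is precisely the topmost gap \(X_{(1)} - X_{(2)}\), its expectation is \(\mu/(n+1)\) immediately. As an alternative I would quote, or derive from the Beta density of a single uniform order statistic, the standard values \(E[X_{(1)}] = \tfrac{n}{n+1}\mu\) and \(E[X_{(2)}] = \tfrac{n-1}{n+1}\mu\) and subtract. Either path yields the first displayed equality of the theorem.

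The participation condition then follows by accounting for the entry cost \(c\), which is sunk upon participation and incurred regardless of the auction's outcome. Treating the winner's expected surplus \(\mu/(n+1)\) as the benefit of participating and zero as the payoff from abstaining, the net payoff from entering is \(\mu/(n+1) - c\), which is positive exactly when \(\mu > c(n+1)\). This gives the stated threshold.

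The step I expect to require the most care is not the algebra but the modeling justification for equating a validator's benefit with the winner's expected surplus \(\mu/(n+1)\). A fully ex-ante individual-rationality calculation would note that, by symmetry, a given validator wins with probability only \(1/n\), so its unconditional expected gain is \(\mu/\big(n(n+1)\big)\) and the corresponding threshold would instead read \(\mu > c\,n(n+1)\). The statement as phrased adopts the winner's perspective, comparing the cost against the surplus conditional on winning; I would make this interpretive choice explicit at the outset, since it is exactly what fixes the form \(\mu > c(n+1)\) of the threshold. The order-statistics and dominant-strategy facts underpinning the rest are standard and present no genuine obstacle.
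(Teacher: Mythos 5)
Your proof is correct and follows essentially the same route as the paper's: the same change of variable to uniform draws on \([0,\mu]\), the same order-statistic values \(E[X_{(1)}]=\tfrac{n}{n+1}\mu\) and \(E[X_{(2)}]=\tfrac{n-1}{n+1}\mu\) (your spacings argument is a nice, equivalent shortcut), and the same comparison of \(\mu/(n+1)-c\) against zero. Your closing caveat is well taken and goes beyond the paper's own proof: the paper silently equates a validator's participation benefit with the surplus \emph{conditional on winning}, whereas a genuinely ex-ante individual-rationality calculation (win probability \(1/n\) by symmetry) would give the threshold \(\mu>c\,n(n+1)\); making that interpretive choice explicit, as you do, is the right thing to do.
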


\begin{proof}
Define \(X_i = v_i - (R-\mu)\) so that \(X_i\sim U[0,\mu]\). In a second-price auction with \(n\) bidders, the expected highest value is \(E[X_{(1)}] = \frac{n}{n+1}\mu\) and the expected second-highest value is \(E[X_{(2)}] = \frac{n-1}{n+1}\mu\). The expected surplus for the winner is therefore
\[
E[X_{(1)}-X_{(2)}] = \frac{n}{n+1}\mu - \frac{n-1}{n+1}\mu = \frac{\mu}{n+1}.
\]
Because each participating validator pays the fixed cost \(c\), a validator who wins obtains a net payoff of \(\frac{\mu}{n+1} - c\) while one who loses obtains \(-c\). Thus, participation is beneficial compared to abstention (which yields 0) if and only if
\[
\frac{\mu}{n+1} - c > 0 \quad \Longleftrightarrow \quad \mu > c\,(n+1).
\]
\(\qed\)
\end{proof}

Next, consider the situation where \(k\) validators have already participated in the auction. For an additional validator joining, the auction becomes one among \(k+1\) bidders. In this case, the condition for participation is modified to
\[
\frac{\mu}{k+2} > c \quad \Longleftrightarrow \quad \mu > c\,(k+2).
\]
This corollary illustrates that as more validators join, the incremental benefit for an additional validator decreases, but participation remains attractive as long as the dispersion \(\mu\) is sufficiently large relative to the total cost incurred by all bidders.

\begin{corollary}
If \(k\) validators have already participated in the auction, an additional validator will have a positive incentive to join if and only if
\[
\mu > c\,(k+2).
\]
\end{corollary}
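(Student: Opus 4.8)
The plan is to obtain the corollary as an immediate specialization of the Theorem rather than reproving the order-statistic identities from scratch. The key observation is that when $k$ validators have already committed to the auction and one further validator contemplates joining, the resulting contest is simply a symmetric second-price auction with $n = k+1$ total bidders. By the symmetry and independence hypotheses carried over from the Theorem, the marginal validator's valuation is an independent draw from the same interval $[R-\mu, R]$, so after the change of variable $X_i = v_i - (R-\mu)$ all $k+1$ valuations are i.i.d.\ $U[0,\mu]$. Hence the hypotheses of the Theorem are met with $n$ replaced by $k+1$.

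First I would substitute $n = k+1$ into the surplus formula established in the Theorem, $E[\Delta] = \frac{\mu}{n+1}$, to conclude that the expected winner surplus in the enlarged auction is $\frac{\mu}{k+2}$. Second, I would reuse the individual-rationality comparison from the Theorem's proof: a participating validator nets $\frac{\mu}{k+2} - c$ in the favorable case and otherwise forfeits only the fixed cost $c$, whereas abstaining yields exactly zero. Therefore joining is strictly preferable to abstaining precisely when $\frac{\mu}{k+2} - c > 0$. Third, since this inequality rearranges to $\mu > c\,(k+2)$ through a chain of equivalences, the stated ``if and only if'' follows directly.

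The step requiring the most care is justifying that the prior participation of the $k$ incumbents does not disturb the order-statistic computation for the marginal entrant. I would argue this by symmetry: because all valuations are drawn i.i.d.\ and no bidder conditions on another's realized value before committing, the ex-ante structure seen by the $(k+1)$-th validator is identical to that of any bidder in a fresh $(k+1)$-player auction. Consequently the substitution $n \mapsto k+1$ is legitimate, and no new computation beyond the Theorem is needed; the remaining work is purely algebraic rearrangement.
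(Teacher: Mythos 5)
Your proposal is correct and follows essentially the same route as the paper, which likewise obtains the corollary by observing that the marginal entrant faces a second-price auction among $k+1$ bidders and substituting $n = k+1$ into the Theorem's surplus formula to get $\frac{\mu}{k+2} > c$. Your additional remark justifying why the incumbents' prior participation does not alter the i.i.d.\ order-statistic structure is a reasonable bit of extra care that the paper leaves implicit.
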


\subsection{Application to the Secondary Auction in a Single-Challenge Scenario}

In realistic deployments of optimistic rollups, it is often observed that only a single dispute challenge is initially submitted. Without further intervention, additional validators may refrain from joining because the early mover appears to secure the reward, leaving later participants with a guaranteed loss of the fixed cost \(c\). To overcome this free-riding behavior, we propose the deployment of a secondary auction contract that is triggered once the first challenge is registered. This contract compels additional validators to decide whether to participate in the auction.

Under the secondary auction, an additional validator now considers joining an auction with \(k\) current participants. As shown above, the participation condition is given by
\[
\mu > c\,(k+2).
\]
For instance, if initially only one challenge exists (i.e., \(k=1\)), then an extra validator will join if and only if
\[
\mu > 3c.
\]
If this condition is satisfied, the secondary auction effectively transforms the environment into one with multiple bidders, thereby enabling competitive bidding. As more validators join, the competitive pressure increases, and the equilibrium outcome will impose a cost on the malicious proposer that approximates the winning bid. In turn, the malicious proposer must incur a cost that reflects the aggregate competitive surplus, rather than merely a minimal challenge fee.

In practice, even if the valuations are nearly identical (i.e., all validators have valuations in \([R-\mu, R]\) with a small \(\mu\)), the auction mechanism’s ability to induce additional participation depends critically on the relation between \(\mu\) and the fixed cost \(c\). If \(\mu\) is sufficiently large relative to \(c\) (specifically, if \(\mu > c\,(k+2)\) for the given number \(k\) of current participants), then additional validators are incentivized to join the auction. This, in turn, increases the cost imposed on the malicious proposer when the final dispute is resolved.

Based on the analysis of the validators' profit, we can calculate the net profit of the malicious proposer with this strategy as the below corollary. It indicates, ironically, when the reward parameter for validators $\alpha$ increases and the number of validators join the dispute challneges increases, the loss of the malicious proposer decreases.

\begin{corollary}
Assuming the malicious proposer forfeits both $D_P$ and $D_G$ upon losing the dispute, and the auction operates ideally according to the Vickrey mechanism with $n$ participating validators whose valuations are drawn from $[R-\mu, R]$ where $R = \alpha D_P + D_G$, the expected net loss of the malicious proposer is given by:
\[
\text{Expected Net Loss}  = (1-\alpha)D_P + \frac{2\mu}{n+1}.
\]
\end{corollary}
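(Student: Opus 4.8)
The plan is to track the malicious proposer's cash flows in two stages---the dispute settlement and the auction settlement---and then take expectations using the order-statistic identities already established in the Theorem. First I would account for the forfeiture. Since the proposer loses the dispute, he relinquishes his entire stake $D_P + D_G$, of which the amount $R = \alpha D_P + D_G$ is awarded to the winning validator (the promised reward) while the residual $(1-\alpha)D_P$ is routed to the communal fund. Either way this leaves the proposer's hands, so before any recovery his gross loss is exactly $D_P + D_G$.

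Next I would quantify the amount the proposer recovers through the Vickrey auction. By the construction of the auction contract, the winning validator pays the second-highest bid $b_{\text{second}}$ for the exclusive right to complete the dispute, and this payment is transferred back to the proposer. In the equilibrium of a second-price auction each validator bids his true valuation, so $b_{\text{second}}$ equals the second-highest valuation $v_{(2)}$. Combining the two stages, the proposer's net loss is the gross forfeiture minus the recovered bid:
\[
\text{Net Loss} = (D_P + D_G) - b_{\text{second}} = (D_P + D_G) - v_{(2)}.
\]

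The third step is to evaluate $E[v_{(2)}]$ by reusing the change of variable from the Theorem. Writing $X_i = v_i - (R-\mu) \sim U[0,\mu]$, I have $v_{(2)} = (R-\mu) + X_{(2)}$, and the identity $E[X_{(2)}] = \frac{n-1}{n+1}\mu$ gives $E[v_{(2)}] = (R-\mu) + \frac{n-1}{n+1}\mu = R - \frac{2\mu}{n+1}$. Substituting into the net-loss expression and using $R = \alpha D_P + D_G$ then yields
\[
E[\text{Net Loss}] = (D_P + D_G) - \left(R - \frac{2\mu}{n+1}\right) = (1-\alpha)D_P + \frac{2\mu}{n+1},
\]
which is the claimed formula; the telescoping of $D_G$ between the forfeiture and the reward is what leaves only the $(1-\alpha)D_P$ term.

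The main obstacle is not the computation---which follows directly from the already-proven order-statistic expectations---but the correct economic bookkeeping. I must verify that the proposer recovers precisely the \emph{second-highest} bid (not the winning valuation, and not the full reward $R$), and that the $(1-\alpha)D_P$ diverted to the communal fund still counts against the proposer rather than being netted out. The step most sensitive to the contract semantics is the implicit premise that the auction winner actually completes the dispute first and claims $R$, so that the payment $b_{\text{second}}$ genuinely changes hands rather than being refunded; under the stated ideal-Vickrey operation this is guaranteed, but it is the assumption I would flag explicitly before invoking it.
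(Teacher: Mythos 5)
Your proposal is correct and follows essentially the same route as the paper's proof: gross forfeiture of $D_P + D_G$ offset by the recovered second-highest bid, with $E[v_{(2)}] = R - \frac{2\mu}{n+1}$ obtained from the uniform order statistics of the Theorem. The only difference is that you explicitly derive $E[v_{(2)}]$ via the change of variables and flag the bookkeeping assumptions (truthful bidding, the winner finishing first so the payment actually transfers), both of which the paper takes for granted.
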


\begin{proof}
Let $\pi_{\text{proposer}}$ be the expected net profit. The proposer receives the expected second-highest bid, $E[v_{(2)}]$, from the auction winner, while forfeiting deposits $D_P$ and $D_G$.
From the analysis of the second-price auction with valuations $v_i \sim U[R-\mu, R]$, where $R = \alpha D_P + D_G$:
\[
E[v_{(2)}] = R - \frac{2\mu}{n+1}
\]
The expected net profit is the revenue minus the costs:
\[
\pi_{\text{proposer}} = E[v_{(2)}] - (D_P + D_G)
\]
Substituting $E[v_{(2)}]$ and $R$:
\[
\pi_{\text{proposer}} = \left( (\alpha D_P + D_G) - \frac{2\mu}{n+1} \right) - (D_P + D_G)
\]
\[
= (\alpha - 1)D_P - \frac{2\mu}{n+1}
\]
Since $\alpha \leq 1$ and $\mu > 0$, $\pi_{\text{proposer}} \leq 0$. The expected net loss is $-\pi_{\text{proposer}}$:
\[
\text{Expected Net Loss} = (1 - \alpha)D_P + \frac{2\mu}{n+1}.
\]
\vspace{-1em} %
\end{proof}

In summary, our analysis under a second-price auction framework with slight valuation heterogeneity demonstrates that validators are incentivized to participate when the potential gain from winning outweighs the participation cost relative to the number of participants ($n$). Moreover, our findings reveal that a malicious proposer can strategically reduce their net loss by employing this auction. Increased competition, driven by either a higher validator reward parameter ($\alpha$) approaching 1 or a larger number of participating validators ($n$), allows the proposer to recoup a greater portion of their forfeited deposit through the auction proceeds (specifically, the second-highest bid which approaches the full reward $R$ as $n$ increases or $\mu$ decreases). Consequently, this increased competition leads to a decrease in the proposer's overall net loss, approaching the theoretical minimum loss of $(1-\alpha)D_P$ as $n \to \infty$ or $\mu \to 0$.

\section{Potential Solutions} \label{section: solution}

In this section, we propose two solutions to mitigate the structural vulnerability which deincentivizes optimistic rollup validators.

\subsection{Escrowed Reward Mechanism}

The first approach addresses the vulnerability arising from early challenge exploitation by modifying the reward distribution mechanism. Rather than immediately allocating the prize to the first valid challenge—as many existing protocols do—this mechanism locks the reward in escrow as soon as a valid dispute challenge is raised. As shown in Algorithm~\ref{alg:reward_lock}, the reward remains locked until \emph{all} challenges pertaining to the block are finalized. This delay in reward allocation prevents a malicious proposer from benefiting by simply finalizing the belately triggered challenge first, ensuring that the final reward distribution reflects the order and promptness of validators.

\begin{algorithm}[htb]
\caption{Escrowed Reward Mechanism for Dispute}
\label{alg:reward_lock}
\begin{algorithmic}[1]
\State \textbf{Input:} disputeStart, disputeDuration
\State \textbf{Initialize:} 
\State \quad rewardsLocked $\gets$ false, rewardPool $\gets$ 0
\State \quad challengerDeposits $\gets$ empty mapping
\Procedure{InitiateChallenge}{challenger, depositAmount}
    \If {currentTime \(<\) disputeStart + disputeDuration}
        \State challengerDeposits[challenger] $\gets$ challengerDeposits[challenger] + depositAmount
        \If {rewardsLocked is false}
            \State rewardsLocked $\gets$ true
            \State rewardPool $\gets$ current contract balance
        \EndIf
    \EndIf
\EndProcedure
\Procedure{FinalizeDispute}{}
    \If {currentTime \(\geq\) disputeStart + disputeDuration}
        \State winningChallenger $\gets$ \Call{SelectWinningChallenger}{}
        \State Transfer rewardPool to winningChallenger
        \State rewardsLocked $\gets$ false
    \EndIf
\EndProcedure
\Procedure{SelectWinningChallenger}{}
    \Comment{Determine winning challenger based on predefined criteria}
    \State \Return chosen challenger address
\EndProcedure
\end{algorithmic}
\end{algorithm}

\paragraph{Mechanism Details:}
\begin{itemize}
    \item \textbf{Escrow of Rewards:} Regardless of the order in which challenges are initiated, the entire reward (i.e., the forfeited deposits) is held in escrow until the dispute resolution period for the block ends.
    \item \textbf{Final Distribution:} Once the dispute window closes, the reward is distributed to the winning challenger. If multiple challenges are valid, the reward is apportioned or allocated according to a predefined rule.
\end{itemize}

\subsubsection{Trade-offs.}
This mechanism effectively prevents a malicious proposer from preempting the dispute process by quickly triggering a challenge using controlled validator accounts. However, a potential downside is the introduction of Miner Extractable Value (MEV) opportunities. An attacker might monitor the mempool for early challenge transactions and submit parallel challenges via MEV-boost techniques to capture a larger portion of the reward. Mitigating this risk may require validators to use private mempools or other techniques to hide their challenge submissions.

\subsection{Commit-Reveal Protocol for Challenge Decisions}

The second approach presented in Algorithm \ref{alg:commit_reveal} introduces a commit-reveal scheme that obscures validators' intentions to challenge a block. Instead of immediately broadcasting a challenge, validators submit a commitment during a designated epoch and then reveal their decision after a short delay.

\begin{algorithm}[htb]
\caption{Commit-Reveal Dispute Mechanism}
\label{alg:commit_reveal}
\begin{algorithmic}[1]
\State \textbf{Data:} Mapping \texttt{Commits} (validator \(\to\) \((commitHash, revealed, decision, blockNumber)\)), commitDeadline, revealDeadline
\Procedure{CommitChallenge}{validator, commitHash}
    \If {currentTime \(<\) commitDeadline}
        \State \texttt{Commits[validator]} \(\gets (commitHash, \; false, \; \_, \; \_)\)
    \EndIf
\EndProcedure
\Procedure{RevealChallenge}{validator, decision, blockNumber, nonce}
    \If {currentTime \(\geq\) commitDeadline and currentTime \(<\) revealDeadline}
        \If {\(\text{hash}(decision, blockNumber, nonce, validator) = \texttt{Commits[validator].commitHash}\)}
            \State Update \texttt{Commits[validator]} to \((\_,\; true,\; decision,\; blockNumber)\)
        \EndIf
    \EndIf
\EndProcedure
\Procedure{ProcessChallenges}{}
    \If {currentTime \(\geq\) revealDeadline}
        \State Process valid challenges and distribute rewards.
    \EndIf
\EndProcedure
\end{algorithmic}
\end{algorithm}

\paragraph{Mechanism Details:}
\begin{itemize}
    \item \textbf{Commit Phase:} In each epoch, validators submit a hash of their decision (e.g., \texttt{challenge} or \texttt{not challenge}), the block number, and a nonce, all signed with their private key. This commit prevents others from knowing the validator's decision in advance. Especially, nonce ensures that other validators cannot deduce the challenge status from the hash value.
    \item \textbf{Reveal Phase:} After the commit phase, validators reveal their decision and associated data. The smart contract verifies that the reveal matches the commitment.
    \item This process prevents a “follow-the-leader” attack where a malicious actor could monitor challenge submissions and mimic or preempt them.
\end{itemize}

\subsubsection{Trade-offs.}
By using a commit-reveal protocol, the challenge decision of each validator is concealed until all commitments are made. This prevents any validator from strategically adjusting their decision based on others’ actions (i.e., challenge follow-up attacks). In addition, if multiple valid challenges exist in the same block, the reward can be distributed fairly. However, this approach introduces extra overhead in the form of additional Layer-1 transactions for both commit and reveal phases. To mitigate increased transaction fees, one might consider extending the epoch duration, aggregating commits off-chain, or leveraging data blob solutions. On the other hand, while the commit phase obscures the specific timing and content of the challenge, the act of committing itself might signal an intent to challenge a particular block. To mitigate this issue, we can consider periodical attention tests using this commitment-based challenges.

In summary, both proposed solutions aim to eliminate the possibility for a malicious proposer to exploit the current dispute game incentives: The \textit{Escrowed Reward Mechanism} ensures that rewards are only distributed after the dispute game for a given block has fully resolved, thereby removing the advantage of being the first to challenge. However, it may introduce MEV risks. The \textit{Commit-Reveal Protocol} obscures validators' decisions during the dispute phase, preventing strategic follow-up challenges. The main trade-off here is the increased overhead due to additional transactions.

\section{Discussion and Concluding Remarks} \label{section: conclusion}

This work identified a critical incentive misalignment in Optimistic Rollup dispute games where validators may lack sufficient economic motivation to challenge invalid blocks due to potentially negligible net rewards. This vulnerability undermines the core security assumption relying on rational challengers.

We demonstrated how a malicious proposer can exploit this weakness, not only by leveraging controlled validators but also by strategically deploying secondary mechanisms like auctions. Such auctions, while seemingly fostering participation, can paradoxically allow the proposer to minimize their own financial penalties by capturing value from the induced validator competition. Our game-theoretic analysis revealed that increased validator engagement, whether through higher reward rates or more participants, can counterintuitively benefit the malicious actor by increasing the portion of the forfeited stake they recoup. This highlights the nuanced and sometimes non-monotonic relationship between participation incentives and overall system security. While our model provides a formal framework for understanding these dynamics, we acknowledge that real-world scenarios involve greater heterogeneity in validator valuations, costs, and risk preferences. Nonetheless, the core vulnerability stemming from the potential gap between gross rewards and net validator profits remains a significant concern.

To mitigate these risks, we proposed countermeasures focusing on decoupling reward timing from challenge finalization (escrowed reward) and obscuring validator challenge intentions (commit-reveal protocol). These solutions aim to restore robust validator incentives, though they introduce their own trade-offs regarding potential MEV opportunities and increased transactional overhead, respectively. Further research is needed to refine these mechanisms and potentially explore hybrid approaches or privacy-enhancing techniques.

\section*{Acknowledgement}

I would like to thank Junwoo Choi of Samsung SDS and Hojung Yang of Korea University for their assessment of the initial idea. I am also grateful to HyukSang Jo of Tokamak Network for his support clarifying the dispute game contract and its operation.

\bibliographystyle{splncs04}
\bibliography{references}

\end{document}